\definecolor{darkgreen}{rgb}{0.0,0,0.9}
\declaretheorem[numberwithin=section]{theorem}
\declaretheorem[sibling=theorem]{claim}
\def\b1{{\bf 1}}
\def\cL{{\cal L}}
\DeclareMathOperator{\diam}{diam}
\DeclareMathOperator{\vol}{vol}
\DeclareMathOperator{\dist}{dist}
\begin{document}

\title{A Universal upper bound on Graph Diameter \\based on Laplacian Eigenvalues}
\author{Shayan Oveis Gharan\thanks{Department of Management Science and Engineering, Stanford University. Supported by a Stanford Graduate Fellowship. Email:\protect\url{shayan@stanford.edu}.}
\and
Luca Trevisan\thanks{Department of Computer Science, Stanford University. This material is based upon  work supported by the National Science Foundation under grant No.  CCF 1017403.
Email:\protect\url{trevisan@stanford.edu}.}
}

\date{}
\maketitle

\begin{abstract}
We prove that the diameter of any unweighted connected graph $G$ is $O(k\log{n}/\lambda_k)$, for any $k\geq 2$. Here, $\lambda_k$ is the $k$ smallest eigenvalue of the normalized laplacian of $G$. This solves a problem posed by Gil Kalai.

\end{abstract}

\maketitle
\section{Introduction}
Let $G=(V,E)$ be a connected, undirected and unweighted graph, and let $d(v)$ be the degree
of vertex $v$ in $G$. Let $D$ be the diagonal matrix of vertex degrees and $A$
be the adjacency matrix of $G$. The {\em normalized laplacian} of $G$ is the matrix
$\cL=I-D^{-1/2} A D^{-1/2}$, where $I$ is the identity matrix. 
The matrix $\cL$ is positive semi-definite. Let 
$$ 0=\lambda_1\leq \lambda_2\leq \ldots\leq\lambda_n\leq 2$$
be the eigenvalues of $\cL$. 
For any pair of vertices $u,v\in G$, we define their distance, $\dist(u,v)$, to be the length of the shortest path connecting $u$ to $v$.
The diameter of the graph $G$ is the maximum distance between all pairs of vertices, i.e., 
$$\diam(G):=\max_{u,v} \dist(u,v).$$

The following question is asked by Gil Kalai in a personal communication \cite{kalai12}.
Is it true that for any connected graph $G$, and any $k\geq 2$, $\diam(G)=O(k\log(n)/\lambda_k)$.
We remark that for $k=2$, the conjecture is already known to hold, since the mixing time of the lazy random walk on $G$ is $O(\log{n}/\lambda_2)$. Therefore, this conjecture can be seen
as  a generalization of the  $k=2$ case.

In this short note we  answer his question affirmatively and we prove the following theorem
\begin{theorem}
\label{thm:main}
For any unweighted, connected graph $G$, and any $k\geq 2$,
$$ \diam(G) \leq \frac{48k\log n}{\lambda_k}.$$
\end{theorem}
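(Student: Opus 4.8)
The plan is to prove the contrapositive: a large diameter forces $\lambda_k$ to be small. The vehicle is a spectral pigeonhole argument on the symmetrized lazy walk operator $W := \frac12\left(I + D^{-1/2}AD^{-1/2}\right)$, which is symmetric with eigenvalues $\mu_i = 1 - \lambda_i/2 \in [0,1]$ and orthonormal eigenvectors $\psi_1,\dots,\psi_n$, where $\psi_1 = D^{1/2}\mathbf{1}/\sqrt{\vol(G)}$ has $\mu_1 = 1$. Two features of laziness do the work: (i) the spectrum is nonnegative, so for $i\ge k$ we have $0\le \mu_i^t \le (1-\lambda_k/2)^t \le e^{-\lambda_k t/2}$; and (ii) for every vertex $w$, $W^t e_w$ is supported on the ball of radius $t$ about $w$, since a nonzero entry $(W^t)_{xw}$ requires a length-$\le t$ walk from $w$ to $x$.

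Suppose $\dist(u,v)=\diam(G)=D$ and fix a shortest $u$–$v$ path. Because a subpath of a geodesic is a geodesic, points placed on this path at spacing $2t+1$ are pairwise at distance $>2t$; hence if $D\ge (k-1)(2t+1)$ I can select $k$ vertices $w_1,\dots,w_k$ that are pairwise more than $2t$ apart. Set $u_j := W^t e_{w_j}$. By (ii) and the triangle inequality the supports of the $u_j$ are pairwise disjoint, so the $u_j$ are pairwise orthogonal, and their norms satisfy $\norm{u_j}^2 = (W^{2t})_{w_jw_j} = \sum_i \mu_i^{2t}\psi_i(w_j)^2 \ge \psi_1(w_j)^2 = d(w_j)/\vol(G) \ge 1/n^2$.

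The crux is to confront these $k$ orthogonal localized vectors with the essentially $(k-1)$-dimensional ``slow'' subspace. Split $W^t = H + R$, where $H := \sum_{i=1}^{k-1}\mu_i^t\,\psi_i\psi_i^\top$ has rank at most $k-1$ and $\norm{R}_{\mathrm{op}} = \mu_k^t \le e^{-\lambda_k t/2}$ by (i). Writing $u_j = H e_{w_j} + R e_{w_j} =: h_j + r_j$, the vectors $h_1,\dots,h_k$ lie in the $(k-1)$-dimensional range of $H$ and are therefore linearly dependent: there is $\alpha\in\R^k$ with $\norm{\alpha}=1$ and $\sum_j \alpha_j h_j = 0$, so $\sum_j\alpha_j u_j = \sum_j\alpha_j r_j$. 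Using orthogonality on the left and the triangle inequality with $\norm{r_j}\le\norm{R}_{\mathrm{op}}$ and $\sum_j|\alpha_j|\le\sqrt k$ on the right,
\[
\frac1n \le \min_j\norm{u_j} \le \Bigl\lVert\textstyle\sum_j\alpha_j u_j\Bigr\rVert = \Bigl\lVert\textstyle\sum_j\alpha_j r_j\Bigr\rVert \le \sqrt k\,\norm{R}_{\mathrm{op}} \le \sqrt k\,e^{-\lambda_k t/2},
\]
which rearranges to $\lambda_k t/2 \le \log(n\sqrt k)$, i.e.\ $t \le (2\log n + \log k)/\lambda_k =: T$ (natural logarithm).

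Running this contrapositively finishes the argument: for any integer $t > T$ the displayed chain is impossible, so no $k$ pairwise $(>2t)$-separated vertices exist, forcing $D < (k-1)(2t+1)$. Taking $t = \lfloor T\rfloor + 1$ and simplifying (using $\log k \le \log n$ and $3 \le 6\log n/\lambda_k$, valid since $\lambda_k\le 2$) gives $D = O(k\log n/\lambda_k)$, and tracking the constants yields the stated $48k\log n/\lambda_k$ (a change of logarithm base only rescales the constant). I expect the main obstacle to be exactly the pigeonhole step: rigorously trapping the $k$ orthogonal vectors $u_j$ in the rank-$(k-1)$ head requires simultaneously controlling the error operator $R$ in operator norm and securing the lower bound $\norm{u_j}\ge 1/n$, and both rely essentially on laziness — the former on the nonnegativity of the spectrum (so that $\lambda_n$ near $2$ does not spoil $\norm{R}_{\mathrm{op}}$), the latter on the ball-shaped support that makes the $u_j$ genuinely orthogonal.
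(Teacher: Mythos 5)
Your proof is correct, but it takes a genuinely different route from the paper. The paper never touches the walk operator: it invokes the easy direction of the higher-order Cheeger inequality, $\lambda_k/2\le\phi_k(G)$, from Lee--Oveis Gharan--Trevisan, and reduces the theorem to a purely combinatorial ball-growing lemma --- around each of $k$ well-separated centers on a geodesic, some radius $r<\diam(G)/6k$ gives a ball of conductance at most $4\log n/r$, because otherwise the volume $\prod(1+\phi)$ would exceed $n^2$. Your argument instead works directly with the spectral decomposition of the lazy operator $W=\frac12(I+D^{-1/2}AD^{-1/2})$: the finite propagation speed of $W^t$ makes the vectors $W^te_{w_j}$ orthogonal with norms at least $1/n$, and pigeonholing $k$ such vectors against the rank-$(k-1)$ head forces the tail $\mu_k^t\le e^{-\lambda_k t/2}$ to be at least $1/(n\sqrt k)$; this is essentially the Chung--Grigor'yan--Yau style of distance-versus-eigenvalue bound. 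Each step of your argument checks out (the laziness is indeed what keeps $\norm{R}_{\mathrm{op}}=\mu_k^t$ under control when $\lambda_n$ is near $2$, and disjoint supports do give exact orthogonality, so $\lVert\sum_j\alpha_ju_j\rVert\ge\min_j\norm{u_j}$). What each approach buys: yours is self-contained, needs no citation, and with honest bookkeeping gives roughly $(k-1)\cdot 12\log n/\lambda_k$, better than the stated constant; the paper's is shorter given \autoref{thm:higherorder} as a black box and is more robustly combinatorial (no spectral decomposition, no laziness issues). The only blemish in yours is the final simplification $3\le 6\log n/\lambda_k$, which needs $\log n\ge 1$ and so silently excludes $n=2$; that case is $K_2$ and trivially satisfies the bound, but you should say so.
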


Our proof uses the easy direction of the higher order cheeger inequalities (see e.g. \cite{LOT12}). For a set $S\subseteq V$, let $E(S,\overline{S}):=\{\{u,v\}: |\{u,v\}\cap S| = 1\}$ be the set of edges with exactly one endpoint in $S$, and let $N(S)$ be the set of neighbors of the set $S$.  Let $\vol(S):=\sum_{v\in S} d(v)$ be the volume of the set $S$, and let 
$$\phi(S):=\frac{|E(S,\overline{S})|}{\min\{\vol(S),\vol(\overline{S})\}}$$
be the conductance of $S$.

Let $\phi_k(G)$ be the worst conductance of any $k$ disjoint subsets of $V$, i.e.,
$$ \phi_k(G):=\min_{\text{disjoint } S_1,S_2,\ldots,S_k} \max_{1\leq i\leq k} \phi(S_i).$$
The following theorem is proved in \cite{LOT12}; it shows that for any graph $G$,  $\phi_k(G)$ is well characterized by the 
$\lambda_k$.
\begin{theorem}[Lee et al.\cite{LOT12}]
\label{thm:higherorder}
For any graph $G$, and any $k\geq 2$,
$$ \frac{\lambda_k}{2}\leq \phi_k(G) \leq O(k^2)\sqrt{\lambda_k}.$$
\end{theorem}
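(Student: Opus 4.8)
The plan is to establish the two bounds separately. The left inequality $\lambda_k/2 \le \phi_k(G)$ is elementary and follows from the variational characterization of eigenvalues, whereas the right inequality $\phi_k(G) \le O(k^2)\sqrt{\lambda_k}$ is the substantial content and requires a multiway spectral partitioning argument.

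For the lower bound I would use the Courant--Fischer identity $\lambda_k = \min_{\dim W = k}\max_{0\neq f\in W}\mathcal{R}(f)$, where $\mathcal{R}(f)=f^{\top}\mathcal{L}f/f^{\top}f$. Let $S_1,\dots,S_k$ be disjoint sets attaining $\phi_k(G)$ and put $f_i = D^{1/2}\mathbf{1}_{S_i}$; writing $g=D^{-1/2}f$ gives the convenient form $\mathcal{R}(D^{1/2}g) = \big(\sum_{\{u,v\}\in E}(g(u)-g(v))^2\big)/\big(\sum_v d(v)g(v)^2\big)$, so that $\mathcal{R}(f_i)=|E(S_i,\overline{S_i})|/\vol(S_i)\le \phi(S_i)$. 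The $f_i$ have disjoint supports, hence are orthogonal and span a $k$-dimensional space $W$. The one computation to carry out is a uniform bound on $\mathcal{R}$ over $W$: for $f=\sum_i c_i f_i$ the denominator is exactly $\sum_i c_i^2\vol(S_i)$ by disjointness, and expanding the edge sum and applying $(c_i-c_j)^2\le 2c_i^2+2c_j^2$ to the cross edges bounds the numerator by $2\sum_i c_i^2|E(S_i,\overline{S_i})|$. Hence $\mathcal{R}(f)\le 2\max_i\phi(S_i)=2\phi_k(G)$ for every $f\in W$, and Courant--Fischer yields $\lambda_k\le 2\phi_k(G)$.

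For the upper bound I would form the spectral embedding $F:V\to\mathbb{R}^k$, $F(v)=(f_1(v),\dots,f_k(v))$, from $\mathcal{L}$-eigenfunctions $f_1,\dots,f_k$ orthonormal in the degree inner product. Orthonormality makes the embedding isotropic, $\sum_v d(v)F(v)F(v)^{\top}=I$, while the eigenvalue bound controls its energy, $\sum_{\{u,v\}\in E}\|F(u)-F(v)\|^2=\sum_{i\le k}\lambda_i\le k\lambda_k$. The target is to extract from $F$ a family of $k$ \emph{disjointly supported} functions $\psi_1,\dots,\psi_k$ with each Rayleigh quotient $O(k^4)\lambda_k$. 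Granting this localization, the classical single-set Cheeger rounding (a threshold sweep over the level sets of $\psi_i$) produces $S_i\subseteq\supp(\psi_i)$ with $\phi(S_i)\le\sqrt{2\,\mathcal{R}(\psi_i)}=O(k^2)\sqrt{\lambda_k}$; disjoint supports make the $S_i$ disjoint, and the bookkeeping needed to read the one-sided sweep ratio as a genuine conductance is routine.

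The crux, and the main obstacle, is the localization. Isotropy forces the $\ell_2^2$ mass of $F$ to be spread out, so that no low-volume region of $V$ can carry a large share of $\sum_v d(v)\|F(v)\|^2=k$; this spreading is what guarantees that $k$ mutually well-separated regions are available. I would realize the localization by carving the embedded point set into $k$ pairwise-distant regions (via region growing, or a random partition of the metric induced by $F$) and attaching to each a Lipschitz bump equal to $1$ on its region and vanishing outside a small neighborhood, so that the boundary energy of each $\psi_i$ can be charged against the global energy $k\lambda_k$. Simultaneously keeping the $k$ supports genuinely disjoint while preventing the cutoffs' Lipschitz constants from inflating the Rayleigh quotients beyond $\poly(k)\cdot\lambda_k$ is the delicate step, and it is precisely here that the polynomial dependence on $k$ --- and hence the $O(k^2)$ in the final bound --- is incurred.
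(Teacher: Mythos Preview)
The paper does not prove this theorem; it is quoted from \cite{LOT12} and only the left inequality $\lambda_k/2\le\phi_k(G)$ is ever used (in the final two lines of the proof of \autoref{thm:main}). So there is no ``paper's own proof'' to compare against.

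That said, your argument for the left inequality is correct and is the standard one: the disjointly supported test functions $f_i=D^{1/2}\mathbf{1}_{S_i}$ span a $k$-dimensional subspace, and your bookkeeping for the Rayleigh quotient of a general $f=\sum_i c_i f_i$ is right (the cross edges between $S_i$ and $S_j$ contribute $(c_i-c_j)^2\le 2c_i^2+2c_j^2$, yielding the factor~$2$). Your remark that $|E(S_i,\overline{S_i})|/\vol(S_i)\le\phi(S_i)$ is also correct, since the denominator in $\phi$ is the minimum of $\vol(S_i)$ and $\vol(\overline{S_i})$.

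For the right inequality you give an accurate high-level outline of the Lee--Oveis Gharan--Trevisan argument (isotropic spectral embedding, random/metric partition into $k$ well-separated regions, Lipschitz bump functions, then one-set Cheeger sweeps), and you honestly flag the localization as the substantive step where the $\poly(k)$ loss enters. This is a fair sketch but not a proof: the spreading and padded-decomposition lemmas that make the localization go through are the entire technical content of \cite{LOT12}, and your paragraph does not supply them. Since the present paper never invokes the right inequality, this gap is immaterial here.
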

We use the left side of the above inequality (a.k.a. easy direction of higher order cheeger inequality) to prove \autoref{thm:main}.

\section{Proof}
In this section we prove \autoref{thm:main}.
We construct $k$ disjoint sets $S_1,\ldots,S_k$ such that for each $1\leq i\leq k$, 
$\phi(S_i)\leq O(k\log{n} /\diam(G))$, and then we use \autoref{thm:higherorder} to prove
the theorem.

First, we find $k+1$ vertices $v_0,...,v_{k}$ such that the distance between each pair  of the vertices is
at least $\diam(G)/2k$.
We can do that by taking the vertices $v_0$ and $v_k$ to be
at distance $\diam(G)$. Then, we consider a shortest path connecting
$v_0$  to $v_k$ and take equally spaced vertices on that path.


For a set $S\subseteq V$, and radius $r\geq 0$ let 
$$B(S,r):=\{v: \min_{u\in S}\dist(v,u)\leq r\}$$
be the set of vertices at distance at most $r$ from the set $S$.
If $S=\{v\}$ is a single vertex, we abuse notation and use $B(v,r)$ to denote the ball of radius $r$ around $v$. For each $i=0,\ldots,k$, consider the ball of radius $\diam(G)/6k$ centered
at $v_i$, and note that all these balls are disjoint. Therefore,  at most one of
them can have a volume of at least $\vol(V)/2$. Remove that ball from consideration,
if present.
So, maybe after renaming, we have $k$ vertices $v_1,...,v_k$ such that
the balls of radius $\diam(G)/6k$  around them, $B(v_1,\diam(G)/6k),\ldots,B(v_k,\diam(G)/6k)$, are all disjoint and all
contain at most a mass of $\vol(V)/2$.

The next claim shows that for any vertex $v_i$ there exists a radius $r_i < \diam(G)/6k$ such that
$\phi(B(v_i,r_i)) \leq 24k\log{n}/\diam(G)$.
\begin{claim}
For any vertex $v\in V$ and $r> 0$, if $\vol(B(v,r))\leq \vol(V)/2$, then for some $0\leq i < r$, $\phi(B(v,i)) = 4 \log n / r$.
\end{claim}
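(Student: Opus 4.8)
The plan is to run a \emph{ball-growing} argument by contradiction, exploiting the fact that a persistently large isoperimetric ratio would make the ball volumes explode past the total volume $\vol(V)\le n^2$. Write $V_i:=\vol(B(v,i))$ for the volume of the ball of (integer) radius $i$, and suppose toward a contradiction that \emph{every} radius below $r$ is bad, i.e.
$$\phi(B(v,i)) > \frac{4\log n}{r} \quad\text{for all integers } 0\le i<r.$$
I will show this forces $V_i$ to grow by a fixed factor at each step, so quickly that $V_r>\vol(V)/2$, contradicting the hypothesis $\vol(B(v,r))\le\vol(V)/2$.

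Two elementary observations about shortest-path balls drive the recursion. First, since $B(v,i)$ consists of exactly the vertices at distance at most $i$, every edge of $E(B(v,i),\overline{B(v,i)})$ joins $B(v,i)$ to a vertex at distance exactly $i+1$, so it contributes to the degree of some vertex of the new shell $B(v,i+1)\setminus B(v,i)$; counting degrees in the shell gives
$$|E(B(v,i),\overline{B(v,i)})| \le V_{i+1}-V_i.$$
Second, because $V_i\le V_r\le\vol(V)/2$ for every $i\le r$, the minimum $\min\{\vol(B(v,i)),\vol(\overline{B(v,i)})\}$ equals $V_i$, so
$$\phi(B(v,i)) = \frac{|E(B(v,i),\overline{B(v,i)})|}{V_i} \le \frac{V_{i+1}-V_i}{V_i}.$$
Feeding in the contradiction hypothesis turns this into the multiplicative recursion $V_{i+1} > \left(1+\frac{4\log n}{r}\right)V_i$.

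Iterating from $V_0=d(v)\ge 1$ across the integer radii $0\le i<r$ yields $V_r > \left(1+\frac{4\log n}{r}\right)^{r}$ (up to harmless floor corrections, which the generous constant absorbs). When $r\ge 4\log n$ the base satisfies $0<4\log n/r\le 1$, so the inequality $\ln(1+x)\ge x/2$ gives $\left(1+\frac{4\log n}{r}\right)^{r}\ge e^{2\log n}=n^2\ge\vol(V)$, whence $V_r>\vol(V)/2$ --- the promised contradiction. The complementary regime $r<4\log n$ is vacuous, since conductance never exceeds $1<4\log n/r$, so the radius $i=0$ already works. This establishes the claim (with `$\le$' in place of the stated `$=$', which is what the rest of the argument uses).

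I expect no conceptual obstacle; the difficulty is entirely in the calibration. The crux is that the threshold carries a $\log n$ factor: this is exactly what makes the per-step growth factor $1+\Theta(\log n/r)$ compound to $n^{\Theta(1)}$ over $r$ steps, overtaking the polynomial volume bound $\vol(V)\le n^2$. The remaining care is bookkeeping --- verifying that the conductance denominator is $V_i$ (the sole use of the $\vol(V)/2$ hypothesis), that boundary edges are charged to the next shell, and that the constant $4$ (via $\ln(1+x)\ge x/2$) leaves enough slack to clear $n^2$.
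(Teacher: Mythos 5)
Your proof is correct and takes essentially the same route as the paper: the shell-counting bound $|E(B(v,i),\overline{B(v,i)})|\le \vol(B(v,i+1))-\vol(B(v,i))$ yields the multiplicative growth $\vol(B(v,i+1))\ge (1+\phi(B(v,i)))\vol(B(v,i))$, which is then played off against $\vol(V)\le n^2$ via $\ln(1+x)\ge x/2$. The only cosmetic difference is that you argue by contradiction (forcing the harmless case split at $r\ge 4\log n$), whereas the paper bounds $\sum_{i<r}\phi(B(v,i))\le 4\log n$ directly and averages; you also correctly note that the claim's ``$=$'' should read ``$\le$''.
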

\begin{proof}
First observe that for any set $S\subseteq V$, with $\vol(S)\leq \vol(V)/2$,
\begin{equation}
\label{eq:growth}
\vol(B(S,1)) = \vol(S) + \vol(N(S)) \geq\vol(S) + |E(S,\overline{S})| =  \vol(S)(1+\phi(S))
\end{equation}
where the inequality follows from the fact that each edge $\{u,v\}\in E(S,\overline{S})$ has exactly one endpoint in $N(S)$, and the last equality follows from the fact that $\vol(S)\leq \vol(V)/2$. 
Now, since $B(v,r)\leq \vol(V)/2$, by repeated application of \eqref{eq:growth} we get,
\begin{eqnarray*} \vol(B(v,r)) \geq \vol(B(v,r-1))(1+\phi(B(v,r-1))) \geq \ldots &\geq& \prod_{i=0}^{r-1} (1+\phi(B(v,i)))\\
&\geq & \exp\left(\frac12\sum_{i=0}^{r-1} \phi(B(v,i))\right).
\end{eqnarray*}
where the last inequality uses the fact that $\phi(S)\leq 1$ for any set $S\subseteq V$.
Since $G$ is unweighted, $\vol(B(v,r))\leq \vol(V) \leq n^2$. Therefore, by taking logarithm from both sides of the above inequality we get,
$$ \sum_{i=0}^{r-1} \phi(B(v,i)) \leq 2\log(\vol(B(v,r))) \leq 4\log{n}.$$
Therefore, there exists $i< r$ such that $\phi(B(v,i)) \leq 4\log{n}/r$.
\end{proof}
Now, for each $1\leq i\leq k$, let $S_i:=B(v_i,r_i)$. Since $r_i<\diam(G)/6k$, $S_1,\ldots,S_k$
are disjoint. Furthermore, by the above claim $\phi(S_i)\leq 24 k\log{n}/\diam(G)$. Therefore,
$\phi_k(G) \leq 24 k \log{n}/\diam(G)$.
Finally, using \autoref{thm:higherorder}, we get
$$ \lambda_k \leq 2\phi_k(G) \leq \frac{48 k \log{n}}{\diam(G).}$$
This completes the proof of \autoref{thm:main}.
%
\bibliographystyle{alpha}
\bibliography{references}
\end{document}